\newtheorem{definition}{Definition}
\newtheorem{lemma}{Lemma}
\newtheorem{theorem}{Theorem}
\Crefname{definition}{Definition}{Definitions}
\crefname{definition}{Def.}{Defs.}
\Crefname{theorem}{Theorem}{Theorems}
\crefname{theorem}{Theorem}{Theorems}
\Crefname{lemma}{Lemma}{Lemmas}
\crefname{lemma}{Lemma}{Lemmas}
\Crefname{section}{Section}{Sections}
\crefname{section}{\S}{\S}
\crefname{algorithm}{Alg.}{Algs.}
\Crefname{algorithm}{Algorithm}{Algorithms}
\crefname{figure}{Fig.}{Figs.}
\Crefname{figure}{Figure}{Figures}
\crefname{table}{Tab.}{Tabs.}
\Crefname{table}{Table}{Tables}
\newcommand{\tmp}[1]{q_{prev}}
\newcommand{\Real}{\mathbb{R}}
\definecolor{pastelviolet}{rgb}{0.8, 0.6, 0.79}
\definecolor{babyblueeyes}{rgb}{0.63, 0.79, 0.95}
\definecolor{pastelyellow}{rgb}{0.99, 0.99, 0.59}
\definecolor{pastelgreen}{rgb}{0.47, 0.87, 0.47}
\definecolor{pastelred}{rgb}{1.0, 0.41, 0.38}
\colorlet{patternblue}{blue!60}
\newcommand{\A}{\mathbf{A}}
\newcommand{\B}{\mathbf{B}}
\newcommand{\CC}{\mathbf{C}}
\newcommand{\D}{\mathbf{D}}
\newcommand{\x}{x_1}
\newcommand{\y}{x_2}
\newcommand{\z}{x_3}
\newcommand{\vc}[1]{\bm{#1}}
\definecolor{pdfurlcolor}{rgb}{0,0,0.6}
\definecolor{pdfcitecolor}{rgb}{0,0.6,0}
\definecolor{pdflinkcolor}{rgb}{0.6,0,0}
\begin{document}

%%
%% The "title" command has an optional parameter,
%% allowing the author to define a "short title" to be used in page headers.
\title{Tight Memory-Independent Parallel Matrix Multiplication Communication Lower Bounds}

%%
%% The "author" command and its associated commands are used to define
%% the authors and their affiliations.
%% Of note is the shared affiliation of the first two authors, and the
%% "authornote" and "authornotemark" commands
%% used to denote shared contribution to the research.

\author{Hussam Al Daas}
\orcid{0000-0001-9355-4042}
\affiliation{%
  \institution{Rutherford Appleton Laboratory}
  \city{Didcot}
  \state{Oxfordshire}
  \country{UK}
}
\email{hussam.al-daas@stfc.ac.uk}

\author{Grey Ballard}
\orcid{0000-0003-1557-8027}
\affiliation{%
  \institution{Wake Forest University}
  \city{Winston-Salem}
  \state{NC}
  \country{USA}}
\email{ballard@wfu.edu}

\author{Laura Grigori}
\orcid{0000-0002-5880-1076}
\affiliation{%
  \institution{Inria Paris}
  \city{Paris}
  \country{France}}
\email{laura.grigori@inria.fr}

\author{Suraj Kumar}
\affiliation{%
  \institution{Inria Paris}
  \city{Paris}
  \country{France}
}
\email{suraj.kumar@inria.fr}

\author{Kathryn Rouse}
\affiliation{%
 \institution{Inmar Intelligence}
 \city{Winston-Salem}
 \state{NC}
 \country{USA}
}
\email{kathryn.rouse@inmar.com}

%%
%% By default, the full list of authors will be used in the page
%% headers. Often, this list is too long, and will overlap
%% other information printed in the page headers. This command allows
%% the author to define a more concise list
%% of authors' names for this purpose.
%%\renewcommand{\shortauthors}{Trovato and Tobin, et al.}
%\renewcommand{\shortauthors}{Author1 et al.}

%%
%% The abstract is a short summary of the work to be presented in the
%% article.
\begin{abstract}
Communication lower bounds have long been established for matrix multiplication algorithms.
However, most methods of asymptotic analysis have either ignored the constant factors or not obtained the tightest possible values.
Recent work has demonstrated that more careful analysis improves the best known constants for some classical matrix multiplication lower bounds and helps to identify more efficient algorithms that match the leading-order terms in the lower bounds exactly and improve practical performance.
The main result of this work is the establishment of memory-independent communication lower bounds with tight constants for parallel matrix multiplication.
Our constants improve on previous work in each of three cases that depend on the relative sizes of the aspect ratios of the matrices.
\end{abstract}

%%
%% The code below is generated by the tool at http://dl.acm.org/ccs.cfm.
%% Please copy and paste the code instead of the example below.
%%

%%
%% Keywords. The author(s) should pick words that accurately describe
%% the work being presented. Separate the keywords with commas.
%\keywords{Communication lower bounds, Multi-TTM, Tensor computations, HBL-inequalities, Parallel algorithms}

%%
%% This command processes the author and affiliation and title
%% information and builds the first part of the formatted document.
\maketitle

\section{Introduction}
\label{sec:intro}

The cost of communication relative to computation continues to grow, so the time complexity of an algorithm must account for both the computation it performs and the data that it communicates.
Communication lower bounds for computations set targets for efficient algorithms and spur algorithmic development.
Matrix multiplication is one of the most fundamental computations, and its I/O complexity on sequential machines and parallel communication costs have been well studied over decades \cite{HK81,ACS90,ITT04,SS14,DE+13}.

The earliest results established asymptotic lower bounds, ignoring constant factors and lower order terms.
Because of the ubiquity of matrix multiplication in high performance computations, more recent attempts have tightened the analysis to obtain the best constant factors \cite{SLLvdG19,KK+19,OLPSR20}.
These improvements in the lower bound also helped identify the best performing sequential and parallel algorithms that can be further tuned for high performance in settings where even small constant factors make significant differences.
We review these advances and other related work in \cref{sec:related}.

The main result of this paper is the establishment of tight constants for memory-independent communication lower bounds for parallel classical matrix multiplication.
In the context of a distributed-memory parallel machine model (see \cref{sec:prelim:costModel}), these bounds apply even when the local memory is infinite, and they are the tightest bounds in many cases when the memory is limited.
Demmel et al.~\cite{DE+13} prove asymptotic bounds for general rectangular matrix multiplication and show that three different bounds are asymptotically tight in separate cases that depend on the relative sizes of matrix dimensions and the number of processors.
Our main result, \cref{thm:main} in \cref{sec:main}, reproduces those asymptotic bounds and improves the constants in all three cases.
Further, in \cref{sec:algo}, we analyze a known algorithm to show that it attains the lower bounds exactly, proving that the constants we obtain are tight.
We present a comparison to previous work in \cref{tab:summary} and discuss it in detail in \cref{sec:conclusion}.

We believe one of the main features of our lower bound result is the simplicity of the proof technique, which makes a unified argument that applies to all three cases.
The key idea is to cast the lower bound as the solution to a constrained optimization problem (see \cref{lem:opt}) whose objective function is the sum of variables that correspond to the amount of data of each matrix required by a single processor's computation.
The constraints include the well-known Loomis-Whitney inequality \cite{LW49} as well as new lower bounds on individual array access (see \cref{lem:projlb}) that are necessary to establish separate results for the three cases.
All of the complexity of the three cases, including establishing the thresholds between cases and the leading terms in each case, are confined to a single optimization problem.
We use fundamental results from convex optimization (reviewed in \cref{sec:pastResults}) to solve the problem analytically.
This unified argument is elegant, it improves previous results to obtain tight constants, and it can be applied more generally to other computations that have iteration spaces with uneven dimensions.

\section{Related Work}
\label{sec:related}

\subsection{Memory-Dependent Bounds for Matrix Multiplication}
\label{sec:related:seq}

The first communication lower bound for matrix multiplication was established by Hong and Kung \cite{HK81}, who obtain the result using computation directed acyclic graph (CDAG) analysis that multiplication of square $n\times n$ matrices on a machine with cache size $M$ requires $\Omega(n^3/\sqrt M)$ words of communication.
Irony, Toledo, and Tiskin \cite{ITT04} reproduce the result using a geometric proof based on the Loomis-Whitney inequality (\cref{lem:LW}), show it applies to general rectangular matrices (replacing $n^3$ with $n_1n_2n_3$), and obtain an explicit constant of $(1/2)^{3/2}\approx .35$.
They also observe that the result is easily extended to the distributed memory parallel computing model (as described in \cref{sec:prelim:costModel}) under mild assumptions by dividing the bound by the number of processors $P$.
We refer to such bounds as ``memory-dependent,'' following \cite{BDHLS12-SS}, where the cache size $M$ is interpreted as the size of each processor's local memory.
Later, Dongarra et al. \cite{DPRSV08} tightened the constant for sequential and parallel memory-dependent bounds to $(3/2)^{3/2}\approx 1.84$.
More recently, Smith et al. \cite{SLLvdG19} prove the constant of 2 and show that it is obtained by an optimal sequential algorithm and is therefore tight.
Both of these results are proved using the Loomis-Whitney inequality.
Kwasniewski et al. \cite{KK+19} use CDAG analysis to obtain the same constant of 2 and show that it is tight in the parallel case (when memory is limited) by providing an optimal algorithm.

\subsection{Bounds for Other Computations}

Hong and Kung's proof technique can be applied to a more general set of computations, including the FFT \cite{HK81}.
Ballard et al. \cite{BDHS12} use the proof technique of \cite{ITT04} to generalize the lower bound (with the same explicit constant) to other linear algebra computations such as LU, Cholesky, and QR factorizations.
The constants of the lower bounds for these and other computations are tightened by Olivry et al. \cite{OLPSR20}, including reproducing the constant of 2 for matrix multiplication.
Kwasniewski et al. \cite{KK+21} also obtain tighter constants for LU and Cholesky factorizations using CDAG analysis.
Christ et al. \cite{CDKSY13} show that a generalization of the Loomis-Whitney inequality can be used to prove communication lower bounds for a much wider set of computations, but the asymptotic bounds do not include explicit constants.
This approach is applied to a tensor computation by Ballard and Rouse \cite{BKR18,BR20}.

\subsection{Memory-Independent Bounds for Parallel Matrix Multiplication}

This section describes the related work that focuses on the topic of this paper.
Aggarwal, Chandra, and Snir \cite{ACS90} extend Hong and Kung's result for matrix multiplication to the LPRAM parallel model, which closely resembles the model we consider with the exception that there exists a global shared memory where the inputs initially reside and where the output must be stored at the end of the computation.
Communication bounds in the related BSP parallel model are also memory independent, and Scquizzato and Silvestri \cite{SS14} establish the same asymptotic lower bounds for matrix multiplication in that model.
In addition to proving bounds for sequential matrix multiplication and the associated memory-dependent bound for parallel matrix multiplication, Irony, Toledo, and Tiskin \cite{ITT04} prove also that a parallel algorithm must communicate $\Omega(n^2/P^{2/3})$ words, and they provide explicit constants in their analysis.
Note that the size of the local memory $M$ does not appear in this bound.
Ballard et al. \cite{BDHLS12-SS} reproduce this result for classical matrix multiplication as well as prove similar results for fast matrix multiplication.
They distinguish between memory-dependent bounds (results described in \cref{sec:related:seq}) and memory-independent bounds for parallel algorithms, and they show the two bounds relate and affect strong scaling behavior.
In particular, when $M\gg n^2/P^{2/3}$ (or equivalently $P \gg n^3/M^{3/2}$), the memory-dependent bound is unattainable because the memory-independent bound is larger.
Demmel et al. \cite{DE+13} extend the memory-independent results to the rectangular case (multiplying matrices of dimensions $n_1\times n_2$ and $n_2\times n_3$), showing that three different bounds apply that depend on the relative sizes of the three dimensions and the number of processors, and their proof provides explicit constants.
For one of the cases, and for a restricted class of parallelizations, Kwasniewski et al. \cite{KK+19} prove a tighter constant and show that it is attainable by an optimal algorithm.

We summarize the constants obtained by these previous works and compare them to our results in \cref{tab:summary}.
Further details of the comparison are given in \cref{sec:comparison}.
Following \cref{thm:main}, the table assumes $m=\max\{n_1,n_2,n_3\}$, $n=\text{median}\{n_1,n_2,n_3\}$, and $k=\min\{n_1,n_2,n_3\}$.

\begin{table}
\begin{tabular}{|c|ccc|}
\hline
& $1\leq P\leq \frac mn$ & $\frac mn \leq P \leq \frac{mn}{k^2}$ & $\frac{mn}{k^2} \leq P$ \\
\hline
Leading term & $nk$ & $\left(\frac{mnk^2}{P}\right)^{1/2}$ & $\left(\frac{mnk}{P}\right)^{2/3}$ \\
\hline
\cite{ACS90} & - & - & $\left(\frac 12\right)^{2/3}\approx .63$ \\
\cite{ITT04} & - & - & $\frac 12=.5$ \\
\cite{DE+13} & $\frac{16}{25}=.64$  & $\left(\frac 23\right)^{1/2}\approx .82$ & 1 \\
%\cite{KK+19} & - & - & 3 \\
\cref{thm:main} & 1 & 2 & 3 \\
\hline
\end{tabular}
\caption{Summary of explicit constants of leading term of parallel memory-independent rectangular matrix multiplication communication lower bounds for multiplication dimensions $m\geq n\geq k$ and $P$ processors}
\label{tab:summary}
\end{table}

\subsection{Communication-Optimal Parallel Matrix Multiplication Algorithms}
\label{sec:relatedalgs}

Both theoretical and practical algorithms that attain the communication lower bounds have been proposed for various computation models and implemented on many different types of parallel systems.
The idea of ``3D algorithms'' for matrix multiplication was developed soon after communication lower bounds were established; see \cite{B89,ACS90,Johnsson93,ABGJP95} for a few examples.
These algorithms partition the 3D iteration space of matrix multiplication in each of the three dimensions and assign subblocks across a 3D logical grid of processors.
McColl and Tiskin \cite{MT99} and Demmel et al.~\cite{DE+13} present recursive algorithms that effectively achieve similar 3D logical processor grid for square and general rectangular problems, respectively.
High-performance implementations of these algorithms on today's supercomputers demonstrate that these algorithms are indeed practical and outperform standard library implementations \cite{SD11,KK+19}.

\section{Preliminaries}
\label{sec:prelim}

\subsection{Parallel Computation Model}
\label{sec:prelim:costModel}

We consider the $\alpha$-$\beta$-$\gamma$ parallel machine model \cite{Thakur:CollectiveCommunications:2005,Chan:CollectiveCommunications:2007}.
In this model, each of $P$ processors has its own local memory of size $M$ and can compute only with data in its local memory.
The processors can communicate data to and from other processors via messages that are sent over a fully connected network (i.e., each pair of processors has a dedicated link so that there is no contention on the network).
Further, we assume the links are bidirectional so that a pair of processors can exchange data with no contention.
Each processor can send and receive at most one message at the same time.
The cost of communication is a function of two parameters $\alpha$ and $\beta$, where $\alpha$ is the per-message latency cost and $\beta$ is the per-word bandwidth cost.
A message of $w$ words sent from one processor to another costs $\alpha +\beta w$.
The parameter $\gamma$ is the cost to perform a single arithmetic operation.
For dense matrix multiplication when sufficiently large local memory is available, bandwidth cost nearly always dominates latency cost. Hence, we focus on the bandwidth cost in this work.
In our model, the communication cost of an algorithm is counted along the critical path of the algorithm so that if two pairs of processors are communicating messages simultaneously, the communication cost is that of the largest message.
In this work, we focus on memory-independent analysis, so the local memory size $M$ can be assumed to be infinite.
We consider limited-memory scenarios in \cref{sec:mm:mdb}.

\subsection{Fundamental Results}
\label{sec:pastResults}

In this section we collect the fundamental existing results we use to prove our main result, \cref{thm:main}.
The first lemma is a geometric inequality that has been used before in establishing communication lower bounds for matrix multiplication \cite{ITT04,BDHS12,DE+13}.
We use it to relate the computation performed by a processor to the data it must access.

\begin{lemma}[Loomis-Whitney \cite{LW49}]
\label{lem:LW}
Let $V$ be a finite set of lattice points in $\mathbb{R}^3$, i.e., points $(i,j,k)$ with integer coordinates. Let $\phi_i(V)$ be the projection of $V$ in the $i$-direction, i.e., all points $(j,k)$ such that there exists an $i$ so that $(i,j,k) \in V$. 
Define $\phi_j(V)$ and $\phi_k(V)$ similarly. 
Then 
$$|V| \leq |\phi_i(V)| \cdot |\phi_j(V)| \cdot |\phi_k(V)|,$$
where $|\cdot|$ denotes the cardinality of a set.
\end{lemma}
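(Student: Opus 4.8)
The plan is to establish the standard, slightly stronger form
$|V|^2 \le |\phi_i(V)|\cdot|\phi_j(V)|\cdot|\phi_k(V)|$, from which the stated inequality is immediate: if $V=\emptyset$ both sides vanish, and otherwise $|V|$ is a positive integer so $|V|\le|V|^2$. The proof I would give is the classical slicing argument combined with the Cauchy--Schwarz inequality, which also makes transparent why the exponents line up.

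First I would slice $V$ orthogonally to the $k$-axis. For each integer $c$ put $V_c=\{(i,j):(i,j,c)\in V\}$, so that $|V|=\sum_c|V_c|$. Within a single slice, let $\rho_i(V_c)$ and $\rho_j(V_c)$ be the projections of $V_c\subseteq\mathbb{Z}^2$ onto the $i$- and $j$-coordinate axes. The two-dimensional case is trivial, since $V_c$ sits inside the $|\rho_i(V_c)|\times|\rho_j(V_c)|$ grid it spans: $|V_c|\le|\rho_i(V_c)|\cdot|\rho_j(V_c)|$. Also $V_c\subseteq\phi_k(V)$, hence $|V_c|\le|\phi_k(V)|$. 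Multiplying the square roots of these two bounds gives $|V_c|\le|\phi_k(V)|^{1/2}\,\bigl(|\rho_i(V_c)|\cdot|\rho_j(V_c)|\bigr)^{1/2}$ for every $c$.

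Next I would sum over $c$, pull out the constant factor $|\phi_k(V)|^{1/2}$, and apply Cauchy--Schwarz to $\sum_c\bigl(|\rho_i(V_c)|\cdot|\rho_j(V_c)|\bigr)^{1/2}$, obtaining $|V|\le|\phi_k(V)|^{1/2}\bigl(\sum_c|\rho_i(V_c)|\bigr)^{1/2}\bigl(\sum_c|\rho_j(V_c)|\bigr)^{1/2}$. The final, and only slightly delicate, step is a bookkeeping identity: $\sum_c|\rho_i(V_c)|$ counts exactly the pairs $(i,c)$ for which some $j$ satisfies $(i,j,c)\in V$, which is precisely $|\phi_j(V)|$; symmetrically $\sum_c|\rho_j(V_c)|=|\phi_i(V)|$. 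Substituting yields $|V|\le\bigl(|\phi_i(V)|\cdot|\phi_j(V)|\cdot|\phi_k(V)|\bigr)^{1/2}$, which is the desired bound.

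I do not expect a real obstacle here, since this is a classical inequality; the only points that require care are (i) correctly identifying $\sum_c|\rho_i(V_c)|$ with $|\phi_j(V)|$ and its mirror image, and (ii) balancing the per-slice estimates so that the Cauchy--Schwarz step produces exactly one square-root factor of each of the three projection sizes. I would also remark that iterating this slicing argument yields the general form $|V|^{n-1}\le\prod_{\ell=1}^{n}|\phi_\ell(V)|$ in $\mathbb{R}^n$, and that one could alternatively just invoke \cite{LW49}; I would keep the short proof both for completeness and because the same slice-and-sum viewpoint reappears in the per-array lower bounds used later in the paper.
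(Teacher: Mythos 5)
Your proof is correct, and it is the classical slice--and--Cauchy--Schwarz argument: the identification $\sum_c|\rho_i(V_c)|=|\phi_j(V)|$ and its mirror are bookkept correctly, and the per-slice factorization $|V_c|\le|\phi_k(V)|^{1/2}\bigl(|\rho_i(V_c)|\,|\rho_j(V_c)|\bigr)^{1/2}$ is exactly what is needed to make Cauchy--Schwarz deliver one square-root factor of each projection size. The paper itself gives no proof of this lemma --- it is simply cited to \cite{LW49} --- so there is no internal argument to compare against. One substantive observation worth making explicit: what you prove is the genuine Loomis--Whitney inequality $|V|^2\le|\phi_i(V)|\,|\phi_j(V)|\,|\phi_k(V)|$, and this sharper form (not the unsquared version displayed in the lemma as written) is what the paper actually relies on downstream: the constraint $\left(\frac{mnk}{P}\right)^2\le x_1x_2x_3$ in \cref{lem:opt} and the derivation in the proof of \cref{thm:main} both require $|\phi_{\A}(F)|\cdot|\phi_{\B}(F)|\cdot|\phi_{\CC}(F)|\ge|F|^2$, which the lemma statement as printed (missing the square on $|V|$) would not supply. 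So your instinct to prove the squared version is not just stronger for its own sake; it is the version that makes the rest of the paper's argument go through, and the displayed lemma statement appears to have a typographical omission.
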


The next set of definitions and lemmas allow us to solve the key constrained optimization problem (\cref{lem:opt}) analytically.
We first remind the reader of the definitions of differentiable convex and quasiconvex functions and of the Karush-Kuhn-Tucker (KKT) conditions.
Here and throughout, we use boldface to indicate vectors and matrices and subscripts to index them, so that $x_i$ is the $i$th element of $\vc{x}$, for example.

\begin{definition}[{\cite[eq. (3.2)]{BV04}}]
\label{def:convex}
A differentiable function $f:\Real^d\rightarrow \Real$ is \emph{convex} if its domain is a convex set and for all $\vc{x},\vc{y} \in \textbf{dom} \; f$, 
$$f(\vc{y}) \geq f(\vc{x}) + \langle \nabla f(\vc{x}), \vc{y} - \vc{x} \rangle.$$
\end{definition}

\begin{definition}[{\cite[eq. (3.20)]{BV04}}]
\label{def:quasiconvex}
A differentiable function $g:\Real^d\rightarrow \Real$ is \emph{quasiconvex} if its domain is a convex set and for all $\vc{x},\vc{y} \in \textbf{dom} \; g$, 
$$g(\vc{y}) \leq g(\vc{x}) \text{ implies that } \langle \nabla g(\vc{x}), \vc{y} - \vc{x} \rangle \leq 0.$$
\end{definition}

\begin{definition}[{\cite[eq. (5.49)]{BV04}}]
\label{def:KKT}
Consider an optimization problem of the form 
\begin{equation}
\label{eq:optprob}
\min_{\vc{x}} f(\vc{x}) \quad \text{ subject to } \quad \vc{g}(\vc{x}) \leq \vc{0}
\end{equation} 
where $f:\Real^d \rightarrow \Real$ and $\vc{g}:\Real^d\rightarrow \Real^c$ are both differentiable. 
Define the dual variables $\vc{\mu}\in\mathbb{R}^c$, and let $\vc{J}_{\vc{g}}$ be the Jacobian of $\vc{g}$.
The \emph{Karush-Kuhn-Tucker (KKT)} conditions of $(\vc{x},\vc{\mu})$ are as follows:
\begin{itemize}
	\item \emph{Primal feasibility}: $\vc{g}(\vc{x}) \leq \vc{0}$;
	\item \emph{Dual feasibility}: $\vc{\mu} \geq 0$;
	\item \emph{Stationarity}: $\nabla f(\vc{x}) + \vc{\mu} \cdot \vc{J}_{\vc{g}}(\vc{x}) = \vc{0}$;
	\item \emph{Complementary slackness}: $\mu_i g_i(\vc{x})=0$ for all $i\in \{1,\dots,c\}$. 
\end{itemize}
\end{definition}

The next two results establish that our key optimization problem in \cref{lem:opt} can be solved analytically  using the KKT conditions.
While the results are not novel, we provide proofs for completeness.

\begin{lemma}[{\cite[Lemma 2.2]{BR20}}]
\label{lem:quasiconvex}
	The function $g_0(\vc{x}) = L - x_1x_2x_3$, for some constant $L$, is quasiconvex in the positive octant.
\end{lemma}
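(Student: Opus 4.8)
The plan is to verify the definition of quasiconvexity (\cref{def:quasiconvex}) directly. The domain of $g_0$ is the positive octant, which is an open convex set, so differentiability is not an issue and it remains only to show that for all $\vc{x},\vc{y}$ in the positive octant with $g_0(\vc{y}) \leq g_0(\vc{x})$ we have $\langle \nabla g_0(\vc{x}), \vc{y}-\vc{x}\rangle \leq 0$. Since $g_0(\vc{y}) \leq g_0(\vc{x})$ is equivalent to $y_1 y_2 y_3 \geq x_1 x_2 x_3$, the whole statement reduces to a single scalar inequality.

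First I would compute the gradient, $\nabla g_0(\vc{x}) = -(x_2 x_3,\; x_1 x_3,\; x_1 x_2)$, so that
$$\langle \nabla g_0(\vc{x}), \vc{y}-\vc{x}\rangle = 3 x_1 x_2 x_3 - \left( x_2 x_3 y_1 + x_1 x_3 y_2 + x_1 x_2 y_3 \right).$$
Thus the claim becomes: $x_2 x_3 y_1 + x_1 x_3 y_2 + x_1 x_2 y_3 \geq 3 x_1 x_2 x_3$ whenever all six coordinates are positive and $y_1 y_2 y_3 \geq x_1 x_2 x_3$.

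Next I would apply the arithmetic–geometric mean inequality to the three positive terms on the left-hand side, which gives $x_2 x_3 y_1 + x_1 x_3 y_2 + x_1 x_2 y_3 \geq 3 \left( x_1^2 x_2^2 x_3^2\, y_1 y_2 y_3 \right)^{1/3}$. Substituting the hypothesis $y_1 y_2 y_3 \geq x_1 x_2 x_3$ inside the cube root bounds this from below by $3 \left( x_1^3 x_2^3 x_3^3 \right)^{1/3} = 3 x_1 x_2 x_3$, which is exactly the required inequality; this finishes the proof.

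I do not expect a serious obstacle here. The only mild subtlety worth flagging in the write-up is that the three-term AM–GM is precisely calibrated so that substituting the constraint $y_1 y_2 y_3 \geq x_1 x_2 x_3$ yields the factor $3 x_1 x_2 x_3$ exactly, and that positivity of all coordinates is what legitimizes both working on the open convex domain and invoking AM–GM.
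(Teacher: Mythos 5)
Your proof is correct and is essentially the same argument as the paper's: both reduce the quasiconvexity condition to the inequality $x_2x_3y_1 + x_1x_3y_2 + x_1x_2y_3 \geq 3x_1x_2x_3$ and establish it via the three-term AM--GM inequality (the paper applies AM--GM to the ratios $y_i/x_i$, which after multiplying through by $x_1x_2x_3$ is identical to your application to the products $x_2x_3y_1$, $x_1x_3y_2$, $x_1x_2y_3$).
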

\begin{proof}
Let $\vc{x},\vc{y}$ be points in the positive octant with $g_0(\vc{y}) \leq g_0(\vc{x})$.
Then $y_1y_2y_3 \geq x_1x_2x_3$.
Applying the inequality of arithmetic and geometric means (AM-GM inequality) to the values $y_1/x_1$, $y_2/x_2$, $y_3/x_3$ (which are all positive), we have
\begin{equation}
\label{eq:AMGM}
\frac13 \left(\frac{y_1}{x_1}+\frac{y_2}{x_2}+\frac{y_3}{x_3}\right) \geq \left(\frac{y_1y_2y_3}{x_1x_2x_3}\right)^{1/3} \geq 1.
\end{equation}
Then $\nabla g_0(\vc{x}) = \begin{bmatrix} -x_2x_3 & -x_1x_3 & -x_1x_2 \end{bmatrix}$, and 
\begin{align*}
\langle \nabla g_0(\vc{x}), \vc{y}-\vc{x} \rangle &= 3x_1x_2x_3-y_1x_2x_3 - x_1y_2x_3 - x_1x_2y_3 \\
&= 3x_1x_2x_3 \left(1 - \frac13 \left(\frac{y_1}{x_1} + \frac{y_2}{x_2} + \frac{y_3}{x_3}\right)\right) \\
&\leq 0,
\end{align*}
where the last inequality follows from \cref{eq:AMGM}.
Then by \cref{def:quasiconvex}, $g_0$ is quasiconvex on the positive octant.
\end{proof}

\begin{lemma}
\label{lem:KKT}
Consider an optimization problem of the form given in \cref{eq:optprob}.
If $f$ is a convex function and each $g_i$ is a quasiconvex function, then the KKT conditions are sufficient for optimality.	
\end{lemma}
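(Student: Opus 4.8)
The plan is to show that the standard proof of sufficiency of KKT conditions for convex problems goes through under the weaker hypothesis that each constraint function $g_i$ is merely quasiconvex rather than convex. The key observation is that the convexity of the $g_i$'s is used in the classical argument only to conclude that a feasible point $\vc{y}$ lies in a half-space determined by the gradient $\nabla g_i(\vc{x})$ at the candidate point $\vc{x}$ — and that conclusion follows directly from \cref{def:quasiconvex} whenever $g_i(\vc{y}) \leq g_i(\vc{x})$.

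\medskip
\noindent\textbf{Key steps.} First I would fix a point $\vc{x}$ satisfying the KKT conditions together with its dual multipliers $\vc{\mu}$, and let $\vc{y}$ be an arbitrary feasible point, so $\vc{g}(\vc{y}) \leq \vc{0}$. The goal is to prove $f(\vc{y}) \geq f(\vc{x})$. Second, for each index $i$, I would split into cases according to whether $\mu_i = 0$ or $\mu_i > 0$. If $\mu_i > 0$, then complementary slackness gives $g_i(\vc{x}) = 0$, and primal feasibility of $\vc{y}$ gives $g_i(\vc{y}) \leq 0 = g_i(\vc{x})$; quasiconvexity of $g_i$ (\cref{def:quasiconvex}) then yields $\langle \nabla g_i(\vc{x}), \vc{y} - \vc{x}\rangle \leq 0$, hence $\mu_i \langle \nabla g_i(\vc{x}), \vc{y} - \vc{x}\rangle \leq 0$. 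If $\mu_i = 0$, that same product is trivially zero. Summing over $i$ gives $\sum_i \mu_i \langle \nabla g_i(\vc{x}), \vc{y} - \vc{x}\rangle \leq 0$, i.e. $\langle \vc{\mu}\cdot\vc{J}_{\vc{g}}(\vc{x}), \vc{y}-\vc{x}\rangle \leq 0$. Third, I would invoke the stationarity condition $\nabla f(\vc{x}) = -\vc{\mu}\cdot\vc{J}_{\vc{g}}(\vc{x})$ to deduce $\langle \nabla f(\vc{x}), \vc{y}-\vc{x}\rangle \geq 0$. Finally, convexity of $f$ (\cref{def:convex}) gives $f(\vc{y}) \geq f(\vc{x}) + \langle \nabla f(\vc{x}), \vc{y}-\vc{x}\rangle \geq f(\vc{x})$, which is the desired optimality of $\vc{x}$.

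\medskip
\noindent\textbf{Main obstacle.} The argument is essentially routine once one recognizes the right case split; there is no substantive analytic difficulty. The one subtlety worth stating carefully is that quasiconvexity of $g_i$ only delivers the gradient inequality when the \emph{sublevel} comparison $g_i(\vc{y}) \leq g_i(\vc{x})$ holds, which is exactly why the case $\mu_i > 0$ must be handled via complementary slackness (forcing $g_i(\vc{x}) = 0$) and primal feasibility of $\vc{y}$ (forcing $g_i(\vc{y}) \leq 0$); when $\mu_i = 0$ we cannot in general compare $g_i(\vc{y})$ and $g_i(\vc{x})$, but we do not need to, since the corresponding term in the sum vanishes regardless. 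I would take care to present these two cases explicitly so the reader sees precisely where each KKT condition is consumed.
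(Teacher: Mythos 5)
Your proof is correct and follows essentially the same route as the paper's: use complementary slackness together with primal feasibility to invoke quasiconvexity of each active $g_i$, sum against the multipliers, appeal to stationarity, and finish with convexity of $f$. The only cosmetic difference is that the paper splits into the cases $\vc{\mu}^* = \vc{0}$ and $\vc{\mu}^* \neq \vc{0}$ (reordering indices WLOG in the latter), whereas your per-index case split on $\mu_i = 0$ versus $\mu_i > 0$ handles both uniformly.
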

\begin{proof}
Suppose $\vc{x}^*$ and $\vc{\mu}^*$ satisfy the KKT conditions given in \cref{def:KKT}. 
If $\vc{\mu}^*=\vc{0}$, then by stationarity, $\nabla f(\vc{x}^*)=\vc{0}$.
Then the convexity of $f$ (\cref{def:convex}) implies 
$$f(\vc{x}) \geq f(\vc{x}^*) + \langle \nabla f(\vc{x}^*), \vc{x}-\vc{x}^* \rangle = f(\vc{x}^*)$$
for all $\vc{x}\in \textbf{dom} \; f$, which implies that $\vc{x}^*$ is a global optimum.

Now suppose $\vc{\mu}^*\neq\vc{0}$, then without loss of generality (and by dual feasibility) there exists $m\leq c$ such that $\mu^*_i>0$ for $1\leq i\leq m$ and $\mu^*_i=0$ for $m< i\leq c$.
Complementary slackness implies that $g_i(\vc{x}^*)=0$ for $1\leq i\leq m$.
Consider any (primal) feasible $\vc{x}\in \textbf{dom} \; f$.
Then $g_i(\vc{x})\leq 0$ for all $i$, and thus $g_i(\vc{x})\leq g_i(\vc{x}^*)$ for $1\leq i \leq m$.
By quasiconvexity of $g_i$ (\cref{def:quasiconvex}), this implies 
$$\langle \nabla g_i(\vc{x}^*), \vc{x}-\vc{x}^* \rangle \leq 0.$$
Stationarity implies that $\nabla f(\vc{x}^*) = -\sum_{i=1}^m \mu^*_i \nabla g_i(\vc{x}^*)$, and thus
$$\langle \nabla f(\vc{x}^*),\vc{x}-\vc{x}^* \rangle = -\sum_{i=1}^m \mu^*_i \langle \nabla g_i(\vc{x}^*), \vc{x}-\vc{x}^* \rangle \geq 0.$$
By convexity of $f$ (\cref{def:convex}), we therefore have
$$f(\vc{x}) \geq f(\vc{x}^*) + \langle \nabla f(\vc{x}^*), \vc{x}-\vc{x}^* \rangle \geq f(\vc{x}^*),$$
and thus $\vc{x}^*$ is a global optimum.
\end{proof}

\section{Main Lower Bound Result}
\label{sec:main}

\subsection{Lower Bounds on Individual Array Access}
\label{sec:projlb}

The following lemma establishes lower bounds on the number of elements of each individual matrix a processor must access based on the number of computations a given element is involved with.
This result is used to establish a set of constraints in the key optimization problem used in the proof of \cref{thm:main}.

\begin{lemma}
	\label{lem:projlb}
	Given a parallel matrix multiplication algorithm that multiplies an $n_1\times n_2$ matrix $\A$ by an $n_2\times n_3$ matrix $\B$ using $P$ processors, any processor that performs at least $1/P$th of the scalar multiplications must access at least $n_1n_2/P$ elements of $\A$ and at least $n_2n_3/P$ elements of $\B$ and also compute contributions to at least $n_2n_3/P$ elements of $\CC=\A\cdot \B$.
\end{lemma}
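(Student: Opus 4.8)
The plan is to reduce the statement to an elementary fiber‑counting argument over the iteration space of classical matrix multiplication. Fix a processor that performs at least $1/P$ of the scalar multiplications, and let $V \subseteq \{1,\dots,n_1\}\times\{1,\dots,n_2\}\times\{1,\dots,n_3\}$ be the set of index triples $(i,j,k)$ for which this processor computes the scalar product $\A_{ij}\cdot\B_{jk}$ (a contribution to $\CC_{ik}$). Since classical matrix multiplication consists of exactly $n_1n_2n_3$ such products, the hypothesis gives $|V| \geq n_1n_2n_3/P$.

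Next I would identify the three quantities in the statement with the three coordinate projections of $V$, in the notation of \cref{lem:LW}. To compute $\A_{ij}\cdot\B_{jk}$ the processor must access $\A_{ij}$ and $\B_{jk}$ and contributes to $\CC_{ik}$; since triples sharing the same first‑and‑second coordinates reference the same entry of $\A$ (and similarly for $\B$ and $\CC$), the number of distinct elements of $\A$ accessed is at least $|\phi_k(V)|$, the number of distinct elements of $\B$ accessed is at least $|\phi_i(V)|$, and the number of entries of $\CC$ receiving a contribution is at least $|\phi_j(V)|$. The key observation is that each projection has fibers of bounded size: a point $(i,j)\in\phi_k(V)$ has at most $n_3$ preimages in $V$ (one per value of $k$), a point $(j,k)\in\phi_i(V)$ has at most $n_1$ preimages, and a point $(i,k)\in\phi_j(V)$ has at most $n_2$ preimages. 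Summing over the fibers gives $|V|\leq n_3\,|\phi_k(V)|$, $|V|\leq n_1\,|\phi_i(V)|$, and $|V|\leq n_2\,|\phi_j(V)|$.

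Combining these fiber bounds with $|V|\geq n_1n_2n_3/P$ yields $|\phi_k(V)|\geq n_1n_2/P$ elements of $\A$, $|\phi_i(V)|\geq n_2n_3/P$ elements of $\B$, and $|\phi_j(V)|\geq n_1n_3/P$ entries of $\CC$, which are exactly the claimed lower bounds. Note that this per‑direction inequality is not \cref{lem:LW} itself but the simpler counting fact behind it, since here we replace one projection by the trivial global bound on that coordinate.

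I do not anticipate a genuine obstacle; the content is a short pigeonhole‑style count, and the only real care needed is definitional bookkeeping within the machine model of \cref{sec:prelim:costModel}: making explicit that computing a product involving $\A_{ij}$ forces the processor to have $\A_{ij}$ in local memory (hence to access it), that we are counting \emph{distinct} accessed elements so that collapsing a fiber does not overcount, and that the $\CC$ bound counts entries to which the processor contributes a partial sum rather than entries it finalizes — which is precisely the form required by the optimization problem in the proof of \cref{thm:main}.
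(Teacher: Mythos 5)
Your proposal is correct and is essentially the same argument as the paper's: both rest on the observation that each entry of $\A$ participates in at most $n_3$ scalar multiplications (and $\B$ in at most $n_1$, $\CC$ in at most $n_2$), combined with $|V|\geq n_1n_2n_3/P$; the paper phrases it as a contradiction ("if fewer than $n_1n_2/P$ elements of $\A$ were accessed, fewer than $n_1n_2n_3/P$ products could be formed"), while you phrase it directly via fiber counting, which is logically equivalent. One small note: the lemma as stated has a typo ($n_2n_3/P$ for $\CC$ should read $n_1n_3/P$), and your derivation correctly produces $n_1n_3/P$, matching the paper's own proof and its subsequent use in \cref{lem:opt}.
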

\begin{proof}
	The total number of scalar multiplications that must be computed is $n_1n_2n_3$.
	Consider a processor that computes at least $1/P$th of these computations.
	Each element of $\A$ is involved in $n_3$ multiplications.
	If the processor accesses fewer than $n_1n_2/P$ elements of $\A$, then it would perform fewer than $n_3 \cdot n_1n_2/P$ scalar multiplications, which is a contradiction.
	Likewise, each element of $\B$ is involved in $n_1$ multiplications.
	If the processor accesses fewer than $n_2n_3/P$ elements of $\B$, then it would perform fewer than $n_1 \cdot n_2n_3/P$ scalar multiplications, which is a contradiction.
	Finally, each element of $\CC$ is the sum of $n_2$ scalar multiplications.
	If the processor computes contributions to fewer than $n_1n_3/P$ elements of $\CC$, then it would perform fewer than $n_2\cdot n_1n_3/P$ scalar multiplications, which is again a contradiction.
\end{proof}

\subsection{Key Optimization Problem}

The following lemma is the crux of the proof of our main result (\cref{thm:main}).
We state the optimization problem abstractly here, but it may be useful to have the following intuition: the variable vector $\vc{x}$ represents the sizes of the projections of the computation assigned to a single processor onto the three matrices, where $\x$ corresponds to the smallest matrix and $\z$ corresponds to the largest matrix.
In order to design a communication-efficient algorithm, we wish to minimize the sum of the sizes of these projections subject to the constraints of matrix multiplication (and the processor performing $1/P$th of the computation), as specified by the Loomis-Whitney inequality (\cref{lem:LW}) and \cref{lem:projlb}.
A more rigorous argument that any parallel matrix multiplication algorithm is subject to these constraints is given in \cref{thm:main}.

We are able to solve this optimization problem analytically using properties of convex optimization (\cref{lem:KKT}).
The three cases of the solution correspond to how many of the individual variable constraints are tight.
When none of them is tight, we can minimize the sum of variables subject to the bound on their product by setting them all equal to each other (Case 3).
However, when the individual variable constraints make this solution infeasible, those become active and the free variables must be adjusted (Cases 1 and 2).

\begin{lemma}
\label{lem:opt}
Consider the following optimization problem:
%\begin{align*}
$$\min_{\vc{x} \in \mathbb{R}^3} \x+\y+\z$$
such that 
$$\left(\frac{mnk}{P}\right)^2 \leq \x\y\z$$
$$\frac{nk}{P} \leq \x$$
$$\frac{mk}{P} \leq \y$$
$$\frac{mn}{P} \leq \z,$$
%\end{align*}
where $m\geq n \geq k \geq 1$ and $P\geq 1 $.
The optimal solution $\vc{x}^*$ depends on the relative values of the constraints, yielding three cases:
\begin{enumerate}
	\item if $P \leq \frac mn$, then $\x^*=nk$, $\y^*=\frac{mk}{P}$, $\z^*=\frac{mn}{P}$;
	\item if $\frac mn \leq P \leq \frac{mn}{k^2}$, then $\x^*=\y^*=\left(\frac{mnk^2}{P}\right)^{1/2}$, $\z^*= \frac{mn}{P}$;
	\item if $\frac{mn}{k^2} \leq P$, then $\x^*=\y^*=\z^*= \big(\frac{mnk}{P}\big)^{\frac{2}{3}}$.
\end{enumerate}
This can be visualized as follows:
\begin{center}
	\begin{tikzpicture}[scale=.95, every node/.style={transform shape}]
	%%\draw (-2,0) -- node[below] {a} ++(2,0) -- node[above] {b} ++(2,0);
	%%\draw (-0.1,0) -- ++(5,0) -- ++(5,0);
	\draw [->, thick] (-0.1,0) -- (15,0) node [below right,pastelgreen,scale=1] {$P$};
	\draw (0, 0.1) -- node [below, pastelred, scale=1]{$1$}(0,-0.1);
	\draw (5, 0.1) -- node [below, pastelred, scale=1]{$\frac{m}{n}$}(5,-0.1);
	\draw (10, 0.1) -- node [below, pastelred, scale=1] {$\frac{mn}{k^2}$}(10,-0.1);
	
	\node[align=left,below,scale=1] at (2.5, -0.4) {$\x^*=nk$\\ $\y^*=\frac{mk}{P}$\\ $\z^*=\frac{mn}{P}$};
	\node[align=left,below,scale=1] at (7.5, -0.6) {$\x^*=\y^*=\big(\frac{mnk^2}{P}\big)^{1/2}$\\$\z^*=\frac{mn}{P}$};
	\node[align=center,below,scale=1] at (12.75, -0.8) {$\x^*=\y^*=\z^*= \big(\frac{mnk}{P}\big)^{2/3}$};	
	\end{tikzpicture}
\end{center} 
\end{lemma}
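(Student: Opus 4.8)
The plan is to recognize this as a convex program in disguise and invoke \cref{lem:KKT}. The objective $f(\vc{x}) = \x+\y+\z$ is linear, hence convex; the product constraint is $g_0(\vc{x}) = (mnk/P)^2 - \x\y\z$, which is quasiconvex on the positive octant by \cref{lem:quasiconvex}; and the individual constraints $g_1(\vc{x}) = nk/P - \x$, $g_2(\vc{x}) = mk/P - \y$, $g_3(\vc{x}) = mn/P - \z$ are affine, hence quasiconvex. Since every feasible point has strictly positive coordinates (each lower bound is a positive number), there is no loss in taking the common domain of $f$ and $\vc{g}$ to be the positive octant, on which all hypotheses of \cref{lem:KKT} hold. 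Consequently it suffices, in each of the three cases, to exhibit the claimed primal point $\vc{x}^*$ together with dual multipliers $\vc{\mu}^* = (\mu_0,\mu_1,\mu_2,\mu_3)$ verifying the four KKT conditions of \cref{def:KKT}; by \cref{lem:KKT} this certifies global optimality.

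First I would record the KKT stationarity system once and for all. With $\nabla f = (1,1,1)$, $\nabla g_0 = (-\y\z,\,-\x\z,\,-\x\y)$, and $\nabla g_i = -\vc{e}_i$ for $i=1,2,3$, stationarity becomes
\[
\mu_0\,\y\z + \mu_1 = 1, \qquad \mu_0\,\x\z + \mu_2 = 1, \qquad \mu_0\,\x\y + \mu_3 = 1 .
\]
In each of the three candidate solutions one has $\x^*\y^*\z^* = (mnk/P)^2$ exactly, so $g_0$ is tight and $\mu_0$ is allowed to be positive; the three cases are distinguished by which of $g_1,g_2,g_3$ are tight, and the ranges of $P$ will emerge exactly as the dual-feasibility inequalities $\vc{\mu}^*\geq\vc{0}$ together with primal feasibility of the non-tight individual constraints.

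Then I would verify the three cases in turn. In Case~1 ($P\leq m/n$), take $\vc{x}^* = (nk,\,mk/P,\,mn/P)$, which makes $g_2$ and $g_3$ tight; setting $\mu_1 = 0$ (admissible since $g_1$ is slack whenever $P>1$), the first stationarity equation gives $\mu_0 = P^2/(m^2nk)$ and the other two give $\mu_2 = 1 - Pn/m$ and $\mu_3 = 1 - Pk/m$, both nonnegative exactly because $P\leq m/n\leq m/k$ (using $n\geq k$). In Case~2 ($m/n\leq P\leq mn/k^2$), take $\x^* = \y^* = (mnk^2/P)^{1/2}$ and $\z^* = mn/P$, so $g_3$ is tight; setting $\mu_1 = \mu_2 = 0$ (admissible since $\x^*\geq nk/P$ and $\y^*\geq mk/P$, the latter holding precisely for $P\geq m/n$), the first two stationarity equations agree and give $\mu_0 = P^{3/2}/((mn)^{3/2}k)$, and the third gives $\mu_3 = 1 - (Pk^2/(mn))^{1/2}$, which is nonnegative exactly when $P\leq mn/k^2$. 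In Case~3 ($P\geq mn/k^2$), take $\x^* = \y^* = \z^* = (mnk/P)^{2/3}$; since $P\geq mn/k^2$ dominates both $nk/m^2$ and $mk/n^2$ and meets the threshold for $\z^*\geq mn/P$, all three individual constraints hold, we may set $\mu_1 = \mu_2 = \mu_3 = 0$, and the three (now identical) stationarity equations give $\mu_0 = (mnk/P)^{-4/3}>0$. Finally I would note that the endpoint values $P = m/n$ and $P = mn/k^2$ produce the same point under the adjacent formulas, so the three pieces agree on overlaps.

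The work here is essentially bookkeeping rather than insight: the one thing to get right is that the algebraic thresholds forced by dual feasibility (the signs of $\mu_2,\mu_3$ in Case~1, of $\mu_3$ in Case~2) and by primal feasibility of the non-tight individual constraints coincide exactly with the stated ranges of $P$; this is precisely where the inequalities $m\geq n\geq k\geq 1$ are used. Once one observes that the feasible region is convex and $f$ is linear, \cref{lem:KKT} does all the heavy lifting, so no separate uniqueness or second-order argument is needed.
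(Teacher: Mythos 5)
Your proposal is correct and follows essentially the same route as the paper's proof: invoke \cref{lem:KKT} with \cref{lem:quasiconvex}, then exhibit primal--dual pairs and verify the four KKT conditions in each case, with exactly the same multipliers (your $\mu_0,\mu_1,\mu_2,\mu_3$ correspond to the paper's $\mu_1^*,\dots,\mu_4^*$ up to indexing). The only stylistic difference is that you derive the multipliers from the stationarity system rather than stating them outright and verifying afterward; the substance is identical.
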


\begin{proof}
By \cref{lem:KKT}, we can establish the optimality of the solution for each case by verifying that there exist dual variables such that the KKT conditions specified in \cref{def:KKT} are satisfied.
This optimization problem fits the assumptions of \cref{lem:KKT} because the objective function and all but the first constraint are affine functions, which are convex and quasiconvex, and the first constraint is quasiconvex on the positive octant (which contains the intersection of the affine constraints) by \cref{lem:quasiconvex}.

To match standard notation (and that of \cref{lem:KKT}), we let 
\begin{equation*}
f(\vc{x}) = \x+\y+\z 
\end{equation*}
and
\begin{equation*}
\vc{g}(\vc{x}) = 
\begin{bmatrix} 
(mnk/P)^2-\x\y\z \\
nk/P - \x \\
mk/P - \y \\
mn/P - \z \\
\end{bmatrix}.
\end{equation*}
Thus the gradient of the objective function is $\nabla f(\vc{x}) = \begin{bmatrix} 1 & 1 & 1 \end{bmatrix} $ and the Jacobian of the constraint function is
\begin{equation*}
\vc{J}_{\vc{g}}(\vc{x}) = \begin{bmatrix}
-x_2x_3 & -x_1x_3 & -x_1x_2 \\
-1 & 0 & 0 \\
0 & -1 & 0 \\
0 & 0 & -1 \\
\end{bmatrix}.
\end{equation*}

\paragraph{Case 1 ($P \leq \frac nm$)} 
We let 
$$\vc{x}^* = \begin{bmatrix} nk & \frac{mk}{P} & \frac{mn}{P} \end{bmatrix}$$ 
and 
$$\vc{\mu}^* = \begin{bmatrix} \frac{P^2}{m^2nk} & 0 & 1 - \frac{Pn}{m} & 1-\frac{Pk}{m} \end{bmatrix}$$
and verify the KKT conditions.
Primal feasibility is immediate, and dual feasibility follows from $P \leq \frac mn \leq \frac mk$, the condition of this case and by the assumption $n\geq k$.
Stationarity follows from direct verification that 
$$\vc{\mu}^* \cdot \vc{J}_{\vc{g}}(\vc{x}^*) = \begin{bmatrix} -1 & -1 & -1 \end{bmatrix}.$$
Complementary slackness is satisfied because the only nonzero dual variables are $\mu_1^*$, $\mu_3^*$, and $\mu_4^*$, and the 1st, 3rd, and 4th constraints are tight.

\paragraph{Case 2 ($\frac mn \leq P \leq \frac{mn}{k^2}$)} 
We let 
$$\vc{x}^* = \begin{bmatrix} \left(\frac{mnk^2}{P}\right)^{1/2} & \left(\frac{mnk^2}{P}\right)^{1/2} & \frac{mn}{P} \end{bmatrix}$$ 
and 
$$\vc{\mu}^* = \begin{bmatrix} \left(\frac{P}{mnk^{2/3}}\right)^{3/2} & 0 & 0 & 1-\left(\frac{Pk^2}{mn}\right)^{1/2} \end{bmatrix}$$
and verify the KKT conditions.
The primal feasibility of $x_1=x_2$ is satisfied because 
$$\frac{nk}{P} \leq \frac{mk}{P} \leq \left(\frac{mnk^2}{P}\right)^{1/2}$$
where the first inequality follows from the assumption $m\geq n$ and the second inequality follows from $m/n \leq P$ (one condition of this case).
The other constraints are clearly satisfied.
Dual feasibility requires that $1-(Pk^2/mn)^{1/2} \geq 0$, which is satisfied because $P \leq mn/k^2$ (the other condition of this case).
Stationarity can be directly verified.
Complementary slackness is satisfied because the 1st and 4th constraints are both tight for $\vc{x}^*$, corresponding to the only nonzeros in $\vc{\mu}^*$.

\paragraph{Case 3 ($\frac{mn}{k^2} \leq P$)} 
We let 
$$\vc{x}^* = \begin{bmatrix} \left(\frac{mnk}{P}\right)^{2/3} & \left(\frac{mnk}{P}\right)^{2/3} & \left(\frac{mnk}{P}\right)^{2/3} \end{bmatrix}$$ 
and 
$$\vc{\mu}^* = \begin{bmatrix} \left(\frac{P}{mnk}\right)^{4/3} & 0 & 0 & 0 \end{bmatrix}$$
and verify the KKT conditions.
We first consider the primal feasibility conditions.
We have
\begin{equation*}
\frac{nk}{P} \leq \frac{mk}{P} \leq \frac{mn}{P} \leq \left(\frac{mnk}{P}\right)^{2/3},
\end{equation*}
where the first two inequalities are implied by the assumption $m\geq n \geq k$ and the last follows from $\frac{mn}{k^2} \leq P$, the condition of this case.
Dual feasibility is immediate, and stationarity is directly verified.
Complementary slackness is satisfied because the 1st constraint is tight for $\vc{x}^*$ and $\mu_1^*$ is the only nonzero.

Note that the optimal solutions coincide at boundary points between cases so that the values are continuous as $P$ varies.
\end{proof}

\subsection{Communication Lower Bound}

We now state our main result, memory-independent communication lower bounds for general matrix multiplication with tight constants.
After the general result, we also present a corollary for square matrix multiplication.
The tightness of the constants in the lower bound is proved in \cref{sec:algo}.

\begin{theorem}
\label{thm:main}
Consider a classical matrix multiplication computation involving matrices of size $n_1\times n_2$ and $n_2\times n_3$.
Let $m=\max\{n_1,n_2,n_3\}$, $n=\emph{\text{median}} \{n_1,n_2,n_3\}$, and $k = \min\{n_1,n_2,n_3\}$, so that $m\geq n\geq k$.  
%This includes, for example, computing $\CC=\A\cdot \B$ where $\A$ is $m\times k$ and $\B$ is $k\times n$.
Any parallel algorithm using $P$ processors that starts with one copy of the two input matrices and ends with one copy of the output matrix and load balances either the computation or the data must communicate at least
$$D - \frac{mn+mk+nk}{P} \text{ words of data},$$
where
$$D = 
\begin{cases} 
	\quad \frac{mn+mk}{P} + nk & \text{ if } \quad 1\leq P \leq \frac mn \\
	\quad 2\left(\frac{mnk^2}{P}\right)^{1/2} + \frac{mn}{P} & \text{ if } \quad \frac mn \leq P \leq \frac{mn}{k^2} \\
	\quad 3\left(\frac{mnk}{P}\right)^{2/3} & \text{ if } \quad \frac{mn}{k^2} \leq P. 
\end{cases}$$
\end{theorem}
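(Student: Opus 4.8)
The plan is to bound the communication of the busiest processor by casting the data it must touch as an instance of the optimization problem of \cref{lem:opt}, and then to subtract off the matrix data that processor may legitimately hold for free. I would begin by fixing an arbitrary parallel algorithm and singling out a processor $p^\star$ that performs at least $n_1n_2n_3/P$ scalar multiplications, out of $n_1n_2n_3$ total; such a processor exists by averaging, and when the \emph{computation} is load balanced every processor performs at least this many. The two horns of the hypothesis are used differently: if the \emph{data} is load balanced, then $p^\star$ holds at most $(n_1n_2+n_2n_3)/P$ words of the inputs and is responsible for at most $n_1n_3/P$ words of the output, hence at most $(mn+mk+nk)/P$ matrix words in total for free; if instead the \emph{computation} is load balanced, I will not bound any single processor's free data but instead use that, summed over all processors, the free data totals exactly $n_1n_2+n_2n_3+n_1n_3 = mn+mk+nk$ words, since each input and each output entry sits on exactly one processor.

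The second step realizes the constraints of \cref{lem:opt}. Let $V$ be the set of lattice points $(i,l,j)$ indexing the multiplications performed by $p^\star$, so $|V|\ge n_1n_2n_3/P$, and let $x_1,x_2,x_3$ be the sizes of the projections of $V$ onto the three index-pair planes, indexed so that $x_1$ counts distinct touched entries of the matrix of size $nk$, $x_2$ of size $mk$, and $x_3$ of size $mn$ (the three pairwise products of $\{n_1,n_2,n_3\}$ being exactly $\{nk,mk,mn\}$). \Cref{lem:projlb} supplies $x_1\ge nk/P$, $x_2\ge mk/P$, and $x_3\ge mn/P$, while the Loomis--Whitney inequality (\cref{lem:LW}), relating $|V|$ to the product of its projection sizes, supplies $(mnk/P)^2\le x_1x_2x_3$. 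Hence $(x_1,x_2,x_3)$ is feasible for the problem in \cref{lem:opt}, and so the number of distinct matrix entries $p^\star$ touches satisfies $x_1+x_2+x_3 \ge D$, with $D$ the case-wise optimum stated in the theorem.

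Third, I would turn ``entries touched'' into ``words communicated.'' In the data-balanced branch, $p^\star$ touches at least $D$ entries while holding at most $(mn+mk+nk)/P$ for free; each remaining input entry must be received by $p^\star$ and each remaining output entry to which it contributes but does not own must be sent by $p^\star$, so $p^\star$ communicates at least $D-(mn+mk+nk)/P$ words through its single-port link, which lies on the critical path. In the computation-balanced branch, every processor touches at least $D$ entries, so the aggregate entry-touch count is at least $PD$; only $mn+mk+nk$ of those touches are free, and each surplus touch is either an input entry received by a processor that did not hold it or a partial output result sent by a non-owning contributor, so the total communication volume is at least $PD-(mn+mk+nk)$; averaging, some processor --- hence the critical path --- carries at least $D-(mn+mk+nk)/P$ words. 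Either way the theorem follows, and the square-matrix corollary is the specialization $m=n=k$, which lies in Case~3.

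I expect the bookkeeping of the third step, not the optimization, to be the crux. One must argue rigorously that a touched-but-not-free entry genuinely costs a word of communication --- delicate for the output matrix, since a processor that owns an output entry can compute all of its contributions locally with no communication, so the surplus must be charged to partial results that non-owning contributors send to owners --- and one must make the ``either\dots or'' dichotomy watertight by pairing the data-balanced regime with the single-processor argument and the computation-balanced regime with the global-sum argument. A secondary subtlety is invoking Loomis--Whitney in the sharp form yielding the $(mnk/P)^2$ product bound rather than the weaker $mnk/P$; without it, \cref{lem:opt} returns smaller values and the leading constants $1,2,3$ are not recovered.
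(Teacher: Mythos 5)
Your proposal follows essentially the same route as the paper: single out a processor that performs at least a $1/P$ fraction of the $mnk$ scalar multiplications, feed the sizes of its three array projections into the optimization problem of \cref{lem:opt} with Loomis--Whitney and \cref{lem:projlb} supplying the constraints, and then subtract the data the processor may hold for free. The one place you diverge is in the computation-balanced branch, where you replace a per-processor argument with a global volume count followed by averaging. The paper instead applies the pigeonhole principle symmetrically in both branches: if the computation is balanced, every processor performs $mnk/P$ multiplications, and since the total input-plus-output footprint is exactly $mn+mk+nk$, some processor must start and end with at most $(mn+mk+nk)/P$ words; if the data is balanced, every processor holds exactly $(mn+mk+nk)/P$ words, and some processor must perform at least $mnk/P$ multiplications. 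Either way there is a single processor $p^\star$ satisfying both properties, and the entire bound is read off from that one processor's projections minus its free data. Your global-sum variant is also valid --- the aggregate free-touch count is bounded by $mn+mk+nk$ because each input and output entry has a unique owner, and averaging transfers this to the critical path --- but it introduces an extra step and breaks the symmetry between the two branches, so the paper's version is cleaner.

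On the two subtleties you flag: the output-matrix accounting does go through as you sketch it. A processor that computes a partial contribution to an entry of $\CC$ it does not finally own must eventually send at least one word for that entry, and the paper absorbs this into the single inequality
$|\phi_{\A}(F)|+|\phi_{\B}(F)|+|\phi_{\CC}(F)| - (mn+mk+nk)/P$
without further elaboration, exactly as you propose. You are also right that the sharp discrete Loomis--Whitney inequality, $|V|^2 \leq |\phi_i(V)|\,|\phi_j(V)|\,|\phi_k(V)|$, is what is required to justify the constraint $(mnk/P)^2 \leq \x\y\z$ in \cref{lem:opt}; note that \cref{lem:LW} as printed (and the corresponding line in the paper's proof of \cref{thm:main}) omits the square on $|V|$, which would only yield $\x\y\z \geq mnk/P$ and therefore cannot recover the leading constants $1,2,3$. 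Your insistence on the sharp form is the correct reading, and the discrepancy in \cref{lem:LW} appears to be a typographical slip rather than an intentional weakening.
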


\begin{proof}
To establish the lower bound, we focus on a single processor.
If the algorithm load balances the computation, then every processor performs $mnk/P$ scalar multiplications, and there exists some processor whose input data at the start of the algorithm plus output data at the end of the algorithm must be at most $(mn+mk+nk)/P$ words of data (otherwise the algorithm would either start with more than one copy of the input matrices or end with more than one copy of the output matrix).
If the algorithm load balances the data, then every processor starts and end with a total of $(mn+mk+nk)/P$ words, and some processor must perform at least $mnk/P$ scalar multiplications (otherwise fewer than $mnk$ multiplications are performed).
In either case, there exists a processor that performs at least $mnk/P$ multiplications and has access to at most $(mn+mk+nk)/P$ data.

Let $F$ be the set of multiplications assigned to this processor, so that $|F|\geq mnk/P$.
Each element of $F$ can be indexed by three indices $(i_1,i_2,i_3)$ and corresponds to the multiplication of $\A(i_1,i_2)$ with $\B(i_2,i_3)$, which contributes to the result $\CC(i_1,i_3)$.
Let $\phi_{\A}(F)$ be the projection of the set $F$ onto the matrix $\A$, so that $\phi_{\A}(F)$ are the entries of $\A$ required for the processor to perform the scalar multiplications in $F$.
Here, elements of $\phi_{\A}(F)$ can be indexed by two indices: $\phi_{\A}(F) = \{(i_1,i_2): \exists \; i_3 \text{ s.t. } (i_1,i_2,i_3) \in F\}$.
Define $\phi_{\B}(F)$ and $\phi_{\CC}(F)$ similarly.
The processor must access all of the elements in $\phi_{\A}(F)$, $\phi_{\B}(F)$, and $\phi_{\CC}(F)$ in order to perform all the scalar multiplications in $F$.
Because the processor starts and ends with at most $(mn+mk+nk)/P$ data, the communication performed by the processor is at least
\begin{equation*}
|\phi_{\A}(F)|+|\phi_{\B}(F)|+|\phi_{\CC}(F)| - \frac{mn+mk+nk}{P},
\end{equation*}
which is a lower bound on the communication along the critical path of the algorithm.

In order to lower bound $|\phi_{\A}(F)|+|\phi_{\B}(F)|+|\phi_{\CC}(F)|$, we form a constrained minimization problem with this expression as the objective function and constraints derived from  \cref{lem:LW,lem:projlb}.
The Loomis-Whitney inequality (\cref{lem:LW}) implies that
\begin{equation*}
|\phi_{\A}(F)|\cdot |\phi_{\B}(F)|\cdot |\phi_{\CC}(F)| \geq |F| \geq \frac{n_1n_2n_3}{P}=\frac{mnk}{P},
\end{equation*}
and the lower bound on the projections from \cref{lem:projlb} means
\begin{equation*}
|\phi_{\A}(F)| \geq \frac{n_1n_2}{P}, \quad |\phi_{\B}(F)| \geq \frac{n_2n_3}{P}, \quad |\phi_{\CC}(F)| \geq \frac{n_1n_3}{P}.
\end{equation*}
For any algorithm, the processor's projections must satisfy these constraints, so the sum of their sizes must be at least the minimum value of optimization problem.
Then by \cref{lem:opt} (and assigning the projections to $\x,\y,\z$ appropriately based on the relative sizes of $n_1,n_2,n_3$), the result follows.
\end{proof}

We also state the result for square matrix multiplication, which is a direct application of \cref{thm:main} with $n_1=n_2=n_3$.

\begin{corollary}
Consider a classical matrix multiplication computation involving two matrices of size $n\times n$.
Any parallel algorithm using $P$ processors that starts with one copy of the input data and ends with one copy of the output data and load balances either the computation or the data must communicate at least
$$3\frac{n^2}{P^{2/3}} - 3\frac{n^2}{P} \text{ words of data}.$$
\end{corollary}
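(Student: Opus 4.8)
The plan is to derive this directly from \cref{thm:main} by specializing to the case $n_1=n_2=n_3=n$. First I would observe that with all three dimensions equal, we have $m=n=k$ in the notation of the theorem, so the two ratios that govern the case distinction become $m/n = 1$ and $mn/k^2 = 1$. Since $P\geq 1$ always holds, every admissible value of $P$ satisfies $mn/k^2 \leq P$, so the square case always lands in the third case of \cref{thm:main}. The hypotheses of \cref{thm:main} --- starting with one copy of the two input matrices, ending with one copy of the output, and load balancing either the computation or the data --- are exactly the hypotheses assumed here, so nothing additional needs to be verified.

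Next I would substitute into the third-case formula. The leading term becomes $D = 3(mnk/P)^{2/3} = 3(n^3/P)^{2/3} = 3n^2/P^{2/3}$, and the subtracted data term becomes $(mn+mk+nk)/P = 3n^2/P$. Hence the communication lower bound is $3n^2/P^{2/3} - 3n^2/P$ words of data, which is precisely the claimed expression. This is the entire content of the argument.

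Since this is a routine specialization, I do not expect any genuine obstacle. The only point worth a moment's care is checking that the degenerate ratios $m/n = mn/k^2 = 1$ collapse the first two cases without introducing an ambiguity at $P=1$; this is harmless because the optimal solutions of \cref{lem:opt} are continuous across the case boundaries (as remarked at the end of that lemma's proof), so evaluating at $P=1$ gives $3n^2 - 3n^2 = 0$ regardless of which case formula one uses, consistent with the third-case expression. I would present the corollary's proof simply as one or two sentences invoking \cref{thm:main} with $m=n=k$ and performing this substitution.
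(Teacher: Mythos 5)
Your proposal is correct and is exactly the paper's argument: the paper introduces the corollary with the single sentence that it is ``a direct application of \cref{thm:main} with $n_1=n_2=n_3$,'' which is precisely the specialization you carry out. Your substitution $D = 3(n^3/P)^{2/3} = 3n^2/P^{2/3}$ and $(mn+mk+nk)/P = 3n^2/P$ is the whole computation, and your remark about the case boundaries collapsing harmlessly at $P=1$ is a correct (if not strictly necessary) sanity check.
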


\section{Optimal Parallel Algorithm}
\label{sec:algo}

In this section we present an optimal parallel algorithm (\cref{alg:3dmatmul}) to show that the lower bound (including the constants) is tight.
The idea is to organize the processors into a 3D processor grid and assign the computation of the matrix multiplication (a 3D iteration space) to processors according to their location in the grid.
The algorithm is not new, but we present it here in full detail for completeness and to provide the complete analysis, which has not appeared before.
In particular, \cref{alg:3dmatmul} is nearly identical to the one proposed by Aggarwal et al.~\cite{ABGJP95}, though they use the LPRAM model and analyze only the case where $P$ is large.
In the LPRAM model, for example, processors can read concurrently from a global shared memory, while in the $\alpha$-$\beta$-$\gamma$ model, the data is distributed across local memories and is shared via collectives like All-Gathers.
Demmel et al.~\cite{DE+13} present and analyze their recursive algorithm to show its asymptotic optimality in all three cases, but they do not track constants.
See \cref{sec:relatedalgs} for more discussion of previous work on optimal parallel algorithms.

Consider the multiplication of an $n_1\times n_2$ matrix $\A$ with an $n_2\times n_3$ matrix $\B$, and let $\CC=\A\cdot\B$.
\Cref{alg:3dmatmul} organizes the $P$ processors into a $3$-dimensional $p_1 \times p_2 \times p_3$ logical processor grid with $p_1p_2p_3=P$.
Note that one or two of the processor grid dimensions may be equal to 1, which simplifies to a 2- or 1-dimensional grid.
A processor coordinate is represented as $(p_1^\prime, p_2^\prime, p_3^\prime)$, where $1\le p_{k}^\prime \le p_k$, for $k=1,2,3$.

The basic idea of the algorithm is to perform two collective operations, All-Gathers, so that each processor receives the input data it needs to perform all of its computation (in an All-Gather, all the processors involved end up with the union of the input data that starts on each processor).
The result of each local computation must be summed with all other contributions to the same output matrix entries from other processors, and the summations are performed via a Reduce-Scatter collective operation (in a Reduce-Scatter, the sum of the input data from all processors is computed so that it ends up evenly distributed across processors).

\begin{algorithm}
	\caption{\label{alg:3dmatmul}Comm-Optimal Parallel Matrix Multiplication}
	\begin{algorithmic}[1]
		\REQUIRE $\A$ is $n_1\times n_2$, $\B$ is $n_2\times n_3$, $p_1 \times p_2 \times p_3$ logical processor grid
		\ENSURE $\CC = \A\cdot\B$ is $n_1 \times n_3$
		\STATE $(p_1^\prime, p_2^\prime, p_3^\prime)$ is my processor ID
		\STATE // Gather input matrix data
		\STATE $\A_{p_1^\prime p_2^\prime}$ = All-Gather($\A_{p_1^\prime p_2^\prime p_3^\prime}$, $(p_1^\prime, p_2^\prime, :)$)\label{alg:3dmatmul:line:allGatherMatrixA}
		\STATE $\B_{p_2^\prime p_3^\prime}$ = All-Gather($\B_{p_1^\prime p_2^\prime p_3^\prime}$, $(:, p_2^\prime, p_3^\prime)$)\label{alg:3dmatmul:line:allGatherMatrixB}
		\STATE // Perform local computation 
		\STATE $\D_{p_1^\prime p_2^\prime p_3^\prime}=\A_{p_1^\prime p_2^\prime} \cdot \B_{p_2^\prime p_3^\prime}$\label{alg:3dmatmul:line:localComputation}
		\STATE // Sum results to compute $\CC_{p_1^\prime p_3^\prime}$
		\STATE $\CC_{p_1^\prime p_2^\prime p_3^\prime}$ = Reduce-Scatter($\D_{p_1^\prime p_2^\prime p_3^\prime}$, $(p_1^\prime, :, p_3^\prime)$)\label{alg:3dmatmul:line:reduceScatterC}
	\end{algorithmic}
\end{algorithm}

\begin{figure}[ht!]
\centering
\begin{tikzpicture}[every node/.append style={transform shape},scale=1.65]
% draw lines signifying collectives
\draw[line width=3,stealth-stealth,red!75] (.1,2,0) -- (.1,2,-2.25);
\draw[line width=3,stealth-stealth,red!75] (0,2,0) -- (-2.25,2,0);
\draw[line width=3,stealth-stealth,red!75] (0,2,0) -- (0,-.25,0);
% right face of comp cube
\begin{scope}[canvas is yz plane at x=.5,rotate=-90,yscale=-1,shift={(-.5,-3+.5)}]
	\draw[fill=red!25] (0,0) rectangle (1,1);
	\draw[fill=red!75] (2/3,0) rectangle (1,1);
	\draw[black] (0,0) grid (3,3);
	\draw[black,xscale=1/3,dotted] (0,0) grid (3,1);
	\node[yscale=-1,scale=2] at (3/2,3/2) {\Large $\CC$};
	\node[yscale=-1,scale=.5] at (1/2,1/2) {$\CC_{11}$};
	\node[yscale=-1,scale=.5] at (1/2,3/2) {$\CC_{12}$};
	\node[yscale=-1,scale=.5] at (1/2,5/2) {$\CC_{13}$};
	\node[yscale=-1,scale=.5] at (3/2,1/2) {$\CC_{21}$};
	\node[yscale=-1,scale=.5] at (3/2,5/2) {$\CC_{23}$};
	\node[yscale=-1,scale=.5] at (5/2,1/2) {$\CC_{31}$};
	\node[yscale=-1,scale=.5] at (5/2,3/2) {$\CC_{32}$};
	\node[yscale=-1,scale=.5] at (5/2,5/2) {$\CC_{33}$};
\end{scope}
% front face of comp cube
\begin{scope}[canvas is yx plane at z=.5,yscale=-1,rotate=180,shift={(-3+.5,-3+.5)}]
	\draw[fill=red!25] (0,2) rectangle (1,3);
	\draw[fill=red!75] (0,2) rectangle (1,7/3);
	\draw[black] (0,0) grid (3,3);
	\draw[black,shift={(0,2)},yscale=1/3,dotted] (0,0) grid (1,3);
	\node[rotate=90,scale=2] at (3/2,3/2) {\Large $\A$};
	\node[rotate=90,scale=.5] at (1/2,1/2) {$\A_{11}$};
	\node[rotate=90,scale=.5] at (1/2,3/2) {$\A_{12}$};
	\node[rotate=90,scale=.5] at (1/2,5/2) {$\A_{13}$};
	\node[rotate=90,scale=.5] at (3/2,1/2) {$\A_{21}$};
	\node[rotate=90,scale=.5] at (3/2,5/2) {$\A_{23}$};
	\node[rotate=90,scale=.5] at (5/2,1/2) {$\A_{31}$};
	\node[rotate=90,scale=.5] at (5/2,3/2) {$\A_{32}$};
	\node[rotate=90,scale=.5] at (5/2,5/2) {$\A_{33}$};
\end{scope}
% top face of comp cube
\begin{scope}[canvas is zx plane at y=(3-.5),rotate=90,shift={(-3+.5,-.5)}]
	\draw[fill=red!25] (2,0) rectangle (3,1);
	\draw[fill=red!75] (2,0) rectangle (3,1/3);
	\draw[black] (0,0) grid (3,3);
	\draw[black,shift={(2,0)},yscale=1/3,dotted] (0,0) grid (1,3);
	\node[rotate=90,scale=2] at (3/2,3/2) {\Large $\B$};
	\node[rotate=90,scale=.5] at (1/2,1/2) {$\B_{11}$};
	\node[rotate=90,scale=.5] at (1/2,3/2) {$\B_{12}$};
	\node[rotate=90,scale=.5] at (1/2,5/2) {$\B_{13}$};
	\node[rotate=90,scale=.5] at (3/2,1/2) {$\B_{21}$};
	\node[rotate=90,scale=.5] at (3/2,5/2) {$\B_{23}$};
	\node[rotate=90,scale=.5] at (5/2,1/2) {$\B_{31}$};
	\node[rotate=90,scale=.5] at (5/2,3/2) {$\B_{32}$};
	\node[rotate=90,scale=.5] at (5/2,5/2) {$\B_{33}$};
\end{scope}
\end{tikzpicture}
\caption{Visualization of \cref{alg:3dmatmul} with a $3\times 3\times 3$ processor grid.  The 3D iteration space is mapped onto the processor grid, and the matrices are mapped to the faces of the grid.  The dark highlighting corresponds to the input data initially owned and the output data finally owned by processor $(1,3,1)$, and the light highlighting signifies the data of other processors it uses to perform the local computation.  The arrows show the sets of processors involved in the three collective operations involving processor $(1,3,1)$.}
\label{fig:3dmatmul}
\end{figure}
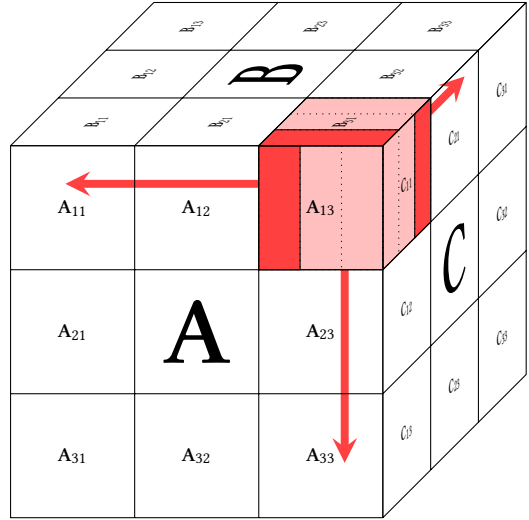

\Cref{alg:3dmatmul} imposes requirements on the initial distribution of the input matrices and the final distribution of the output.
These conditions do not always hold in practice, but to show that the lower bound (which makes no assumption on data distribution except that only 1 copy of the input exists at the start of the computation) is tight, we allow the algorithm to specify its distributions.
For simplicity of explanation, we assume that $p_1$, $p_2$, $p_3$ evenly divide $n_1$, $n_2$, $n_3$, respectively. 
We use the notation $\A_{p_1^\prime p_2^\prime}$ to denote the submatrix of $\A$ such that
$$\A_{p_1^\prime p_2^\prime} = \A\left((p_1^\prime-1) \cdot \frac{n_1}{p_1} + 1 : p_1^\prime \cdot \frac{n_1}{p_1},  (p_2^\prime-1) \cdot \frac{n_2}{p_2} + 1 : p_2^\prime \cdot \frac{n_2}{p_2}\right),$$
and we define $\B_{p_2^\prime p_3^\prime}$ and $\CC_{p_1^\prime p_3^\prime}$ similarly.
The algorithm assumes that at the start of the computation, $\A_{p_1^\prime p_2^\prime}$ is distributed evenly across processors $(p_1^\prime, p_2^\prime, :)$ and $\B_{p_2^\prime p_3^\prime}$ is distributed evenly across processors $(:, p_2^\prime, p_3^\prime)$.
We define $\A_{p_1^\prime p_2^\prime p_3^\prime}$ and $\B_{p_1^\prime p_2^\prime p_3^\prime}$ as the elements of the input matrices that processor $(p_1^\prime, p_2^\prime, p_3^\prime)$ initially owns.
At the end of the algorithm, $\CC_{p_1^\prime p_3^\prime}$ is distributed evenly across processors $(p_1^\prime, :, p_3^\prime)$, and we let $\CC_{p_1^\prime p_2^\prime p_3^\prime}$ be the elements owned by processor $(p_1^\prime, p_2^\prime, p_3^\prime)$.

\Cref{fig:3dmatmul} presents a visualization of \cref{alg:3dmatmul}.
In this example, we have $n_1=n_2=n_3$, and 27 processors are arranged in a $3\times 3\times 3$ grid.
We highlight the data and communication of a particular processor with ID $(1,3,1)$.
The dark highlighting corresponds to the input data initially owned by the processor ($\A_{131}$ and $\B_{131}$) as well as the output data owned by the processor at the end of the computation ($\CC_{131}$).
The figure shows each of these submatrices as block columns of the submatrices $\A_{13}$, $\B_{31}$, and $\CC_{11}$, but any even distribution of these across the same set of processors suffices.
The light highlighting of the submatrices $\A_{13}$, $\B_{31}$, and $\CC_{11}$ corresponds to the data of other processors involved in the local computation on processor $(1,3,1)$, and their size corresponds to the communication cost.
The three collectives that involve processor $(1,3,1)$ occur across three different fibers in the processor grid, as depicted by the arrows in the figure.

\subsection{Cost Analysis}
\label{sec:cost_analysis}

Now we analyze computation and communication costs of the algorithm. 
Each processor performs $\frac{n_1}{p_1}.\frac{n_2}{p_2}.\frac{n_3}{p_3} = \frac{n_1n_2n_3}{P}$ local computations in Line~\ref{alg:3dmatmul:line:localComputation}. 
Communication occurs only in the All-Gather and Reduce-Scatter collectives in Lines~\ref{alg:3dmatmul:line:allGatherMatrixA}, \ref{alg:3dmatmul:line:allGatherMatrixB},  and~\ref{alg:3dmatmul:line:reduceScatterC}. 
Each processor is involved in two All-Gathers involving input matrices and one Reduce-Scatter involving the output matrix. 
Lines~\ref{alg:3dmatmul:line:allGatherMatrixA}, \ref{alg:3dmatmul:line:allGatherMatrixB} specify simultaneous All-Gathers across sets of $p_3$, $p_1$ processors, respectively, and Line~\ref{alg:3dmatmul:line:reduceScatterC} specifies simultaneous Reduce-Scatters across sets of $p_2$ processors. 
Note that if $p_k=1$ for any $k=1,2,3$, then the corresponding collective can be ignored as no communication occurs.
The difference between \cref{alg:3dmatmul} and \cite[Algorithm 1]{ABGJP95} is the Reduce-Scatter collective, which replaces the All-to-All collective and has smaller latency cost.

We assume that bandwidth-optimal algorithms, such as bidirectional exchange or recursive doubling/halving, are used for the All-Gather and Reduce-Scatter collectives. 
The optimal communication cost of these collectives on $p$ processors is $(1-\frac{1}{p})w$, where $w$ is the words of data in each processor after All-Gather or before Reduce-Scatter collective~\cite{Thakur:CollectiveCommunications:2005,Chan:CollectiveCommunications:2007}. 
Each processor also performs $(1-\frac{1}{p})w$ computations for the Reduce-Scatter collective.
%%We refer the reader to~\cite{Chan:CollectiveCommunications:2007} for more details on efficient algorithms for collectives.

% define macro for drawing matmul iteration space with given proc grid
\newcommand{\drawcube}[6]{
% set matmul dimensions
\pgfmathsetmacro{\mdim}{#1}
\pgfmathsetmacro{\ndim}{#2}
\pgfmathsetmacro{\kdim}{#3}
% set proc grid dimension
\pgfmathsetmacro{\pdim}{#4}
\pgfmathsetmacro{\qdim}{#5}
\pgfmathsetmacro{\rdim}{#6}
% set label offset
\pgfmathsetmacro{\offset}{.5}

% dimension labels
\node[shift={(.5,-.5,.5)},scale=2] at (-\kdim/2,-\offset,0) {$n$};
\node[shift={(.5,-.5,.5)},scale=2] at (\offset/2,-\offset/2,-\ndim/2) {$k$};
\node[shift={(.5,-.5,.5)},scale=2] at (-\kdim-\offset,\mdim/2,0) {$m$};

% right face of comp cube
\begin{scope}[canvas is yz plane at x=.5,rotate=-90,yscale=-1,shift={(-.5,-\mdim+.5)}]
	\draw[black,xscale=\ndim/\qdim,yscale=\mdim/\pdim] (0,0) grid (\qdim,\pdim);
	\node[yscale=-1,scale=2] at (\ndim/2,\mdim/2) {\Large $\CC$};
\end{scope}
% front face of comp cube
\begin{scope}[canvas is yx plane at z=.5,yscale=-1,rotate=180,shift={(-\mdim+.5,-\kdim+.5)}]
	\draw[black,xscale=\mdim/\pdim,yscale=\kdim/\rdim] (0,0) grid (\pdim,\rdim);
	\node[rotate=90,scale=2] at (\mdim/2,\kdim/2) {\Large $\A$};
\end{scope}
% top face of comp cube
\begin{scope}[canvas is zx plane at y=(\mdim-.5),rotate=90,shift={(-\kdim+.5,-.5)}]
	\draw[black,xscale=\kdim/\rdim,yscale=\ndim/\qdim] (0,0) grid (\rdim,\qdim);
	\node[rotate=90,scale=2] at (\kdim/2,\ndim/2) {\Large $\B$};
\end{scope}
}

\begin{figure*}[ht]
\begin{subfigure}{.3\linewidth}
	\centering
	\begin{tikzpicture}[every node/.append style={transform shape},scale=.6]
		\drawcube{16}{1}{4}{3}{1}{1}
	\end{tikzpicture}
	\caption{1D case: $P=3$ with grid $3\times 1\times 1$}
	\label{fig:MM:1D}
\end{subfigure} 
\begin{subfigure}{.3\linewidth}
	\centering
	\begin{tikzpicture}[every node/.append style={transform shape},scale=.6]
		\drawcube{16}{1}{4}{12}{1}{3}
	\end{tikzpicture}
	\caption{2D case: $P=36$ with grid $12 \times 3 \times 1$}
	\label{fig:MM:2D}
\end{subfigure} 
\begin{subfigure}{.3\linewidth}
	\centering
	\begin{tikzpicture}[every node/.append style={transform shape},scale=.6]
		\drawcube{16}{1}{4}{32}{2}{8}
	\end{tikzpicture}
	\caption{3D case: $P=512$ with grid $32 \times 8 \times 2$}
	\label{fig:MM:3D}
\end{subfigure}
\caption{Example parallelizations of iteration space of multiplication of a $9600 \times 2400$ matrix $\A$ with a $2400 \times 600$ matrix $\B$}
\label{fig:MM:123D}
\end{figure*}
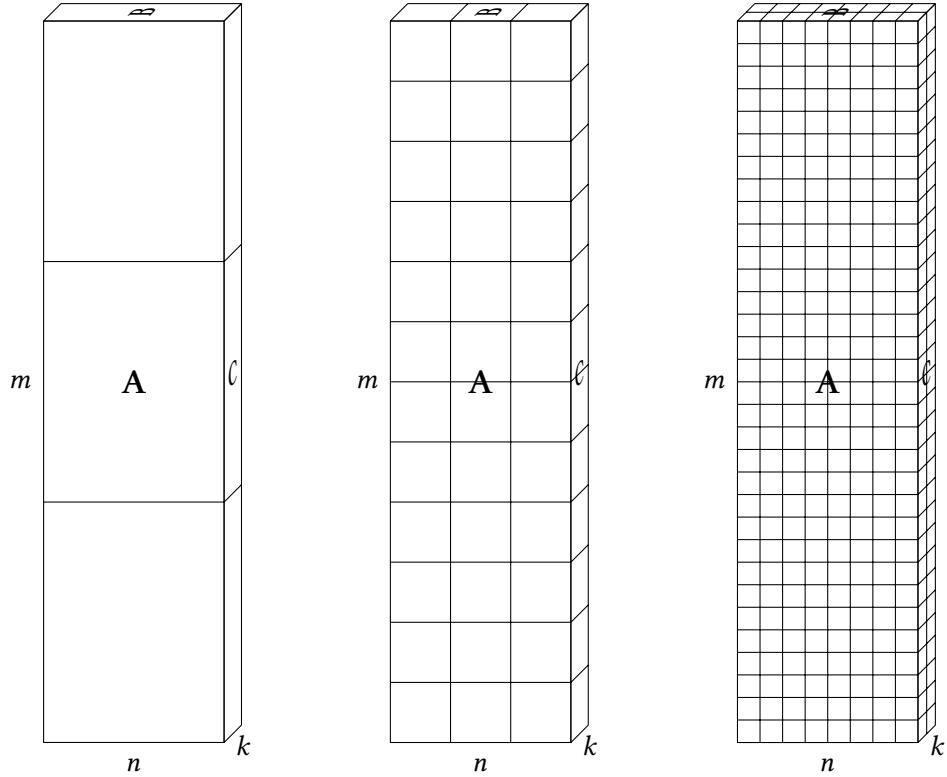

Hence the communication costs of Lines~\ref{alg:3dmatmul:line:allGatherMatrixA}, \ref{alg:3dmatmul:line:allGatherMatrixB} in \Cref{alg:3dmatmul} are $(1-\frac{1}{p_3}) \frac{n_1n_2}{p_1p_2}$ and $(1-\frac{1}{p_1}) \frac{n_2n_3}{p_2p_3}$, respectively, to accomplish All-Gather operations, and the communication cost of performing the Reduce-Scatter operation in Line~\ref{alg:3dmatmul:line:reduceScatterC} is $(1-\frac{1}{p_2}) \frac{n_1n_3}{p_1p_3}$. 
Note that if $p_k=1$ for any $k=1,2,3$, then the cost of the corresponding collective reduces to 0.
Thus the overall cost of \Cref{alg:3dmatmul} is 
\begin{equation}
\label{eq:cost}
\frac{n_1n_2}{p_1p_2} + \frac{n_2n_3}{p_2p_3} + \frac{n_1n_3}{p_1p_3} - \left(\frac{n_1n_2 + n_2n_3 +n_1n_3}{P}\right).
\end{equation} 
Due to Reduce-Scatter operation, each processor also performs $(1-\frac{1}{p_2}) \frac{n_1n_3}{p_1p_3}$ computations, which is dominated by the $\frac{n_1n_2n_3}{P}$ computations of Line~\ref{alg:3dmatmul:line:localComputation}.

\subsection{Optimal Processor Grid Selection}

The communication cost of \Cref{alg:3dmatmul}, given by \cref{eq:cost}, depends on the processor grid dimensions.
Here we discuss how to select the processor grid dimensions such that the lower bound on communication given in~\cref{thm:main} is attained by~\cref{alg:3dmatmul} given the matrices dimensions $n_1,n_2$ and $n_3$ and the number of processors $P$. 
As before, we let $m,n,k$ represent the maximum, median, and minimum values of the three dimensions.
Letting $p_1,p_2,p_3$ be the grid dimensions, we similarly define $p,q,r$ to be the processor grid dimensions corresponding to matrix dimensions $m,n,k$, respectively.
Because the order of processor grid dimensions will be chosen to be consistent with the matrix dimensions, we will have $p\geq q\geq r$.
To demonstrate the tightness of the lower bound, the analysis below assumes that the processor grid dimensions divide the matrices dimensions. 

Following~\cref{thm:main}, depending on the relative sizes of the aspect ratios among matrix dimensions and the number of processors, we encounter three cases that correspond to 3D, 2D, and 1D processor grids.
That is, when $p_i=1$ for one value of $i$, then the processor grid is effectively 2D, and when $p_i=1$ for two values of $i$, the processor grid is effectively 1D.
In the following we show how to obtain the grid dimensions and show that \Cref{alg:3dmatmul} attains the communication lower bound given in~\cref{thm:main} in each case.

First, suppose $P \le \frac{m}{n}$.
In this case, we set $r=q=1$, and set $p=P$ to obtain a 1D grid.
From \cref{eq:cost}, \Cref{alg:3dmatmul} has communication cost
\begin{equation*}
\frac{mn+mk}{P} + nk- \frac{mn + mk + nk}{P} = \left(1-\frac1P\right) nk,
\end{equation*}
which matches the 1st case of \cref{thm:main}.

Now suppose $\frac{m}{n} < P \le \frac{mn}{k^2}$.
We set $r=1$, and set $p$ and $q$ such that $\frac{m}{p} = \frac{n}{q}$, yielding $p = \left(\frac{P}{mn}\right)^{1/2} m$ and $q =\left(\frac{P}{mn}\right)^{1/2}n$.
Note that the assumption $\frac mn < P$ is required so that $q> 1$, and $p> 1$ also follows.
Our analysis also assumes that $p$ and $q$ are integers, which is sufficient to show that the lower bound is tight in general as there are an infinite number of dimensions for which the assumption holds.
In this case, we have a 2D processor grid, and \Cref{alg:3dmatmul} has communication cost
\begin{equation*}
\frac{mn}{pq}+\frac{mk}{p}+\frac{nk}{q} - \frac{mn + mk + nk}{pq} = 2\left(\frac{mnk^2}{P}\right)^{1/2} - \frac{mk + nk}{P},
\end{equation*}
matching the 2nd case of \cref{thm:main}.

Finally, suppose $\frac{mn}{k^2} < P$. 
As suggested in \cite{ABGJP95}, we set the grid dimensions such that $\frac{m}{p} = \frac{n}{q} = \frac{k}{r}$. 
That is, $r=\left(\frac{P}{mnk}\right)^{1/3}k$, $q=\left(\frac{P}{mnk}\right)^{1/3}n$, and $p=\left(\frac{P}{mnk}\right)^{1/3}m$.
Note that the assumption $\frac{mn}{k^2} < P$ is required so that $r> 1$ (which also implies $q> 1$ and $p> 1$).
This assumption was implicit in the analysis of \cite{ABGJP95}.
Again, we assume that $p,q,r$ are integers.
Thus, we have a 3D processor grid and a communication cost of
\begin{equation*}
3\left(\frac{mnk}{P}\right)^{2/3} -  \frac{mn + mk + nk}{P},
\end{equation*}
which matches the 3rd case of \cref{thm:main}.

Comparing the obtained communication cost in each case with the lower bound obtained in~\cref{thm:main} we conclude that \Cref{alg:3dmatmul} is optimal given the grid dimensions are selected as above.

\subsection{Optimal Processor Grid Examples}

\Cref{fig:MM:123D} illustrates each of the three cases for a fixed set of matrix dimensions.
Here we consider multiplying a $9600 \times 2400$ matrix $\A$ with a $2400 \times 600$ $\B$ so that $\CC$ is $9600 \times 600$, so in our notation with $m\geq n\geq k$, $\A$ is $m\times n$, $\B$ is $n\times k$, and $\CC$ is $m\times k$.
The 3D $m\times n\times k$ iteration space is visualized with faces corresponding to correctly oriented matrices.
In this example, we consider $P\in\{3, 36, 512\}$.

With 3 processors, we fall into the 1st case, as $P \le \frac{m}{n}=4$, and the optimal processor grid is $3\times 1\times 1$, which is 1D as shown in \cref{fig:MM:1D}.
Note that the computation assigned to each processor is not a cube in this case, that is, $\frac mp \neq \frac nq \neq \frac kr$.
The only data that must be communicated are entries of $\B$, though all processors need all of $\B$.

When $P=36$, we fall into the 2nd case, and the optimal processor grid is 2D: $12\times3\times1$, as shown in \cref{fig:MM:2D}.
Here we see that the iteration space assigned to each processor is $800\times 800\times 600$, so $\frac mp = \frac nq \neq \frac kr$.
In this case, entries of $\B$ and $\CC$ must be communicated, but each entry of $\A$ is required by only one processor.

Finally, for $P=512$, we satisfy $P> \frac{mn}{k^2}=64$ and fall into the 3rd case.
The optimal processor grid is shown in \cref{fig:MM:3D} to be $32\times8\times2$, and we see that the local computation of each processor is a cube: $\frac mp = \frac nq = \frac kr$.
For a 3D grid, entries of all 3 matrices are communicated.

\section{Conclusion}
\label{sec:conclusion}

\Cref{thm:main} establishes memory-independent communication lower bounds for parallel matrix multiplication.
By casting the lower bound on accessed data as the solution to a constrained optimization problem, we are able to obtain a result with explicit constants spanning over three scenarios that depend on the relative sizes of the matrix aspect ratios and the number of processors.
\Cref{alg:3dmatmul} demonstrates that the constants established in \cref{thm:main} are tight, as the algorithm is general enough to be applied in each of the three scenarios by tuning the processor grid.
As we discuss below, our lower bound proof technique tightens the constants proved in earlier work, and we believe it can be generalized to improve known communication lower bounds for other computations.

\subsection{Comparison to Existing Results}
\label{sec:comparison}

We now provide full details of the constants presented in \cref{tab:summary}, and compare the previous results with the constants of \cref{thm:main}.
The first row of the table gives the constant from the proof by Aggarwal, Chandra, and Snir \cite[Theorem 2.3]{ACS90}.
While the result is stated asymptotically, an explicit constant is given in a key lemma (\cite[Lemma 2.2]{ACS90}) used in the proof, from which we can derive the constant for the main result.

The second row of the table corresponds to the work of Irony, Toledo, and Tiskin \cite{ITT04}, who establish the first parallel bounds for matrix multiplication.
Their memory-independent  bound is stated for square matrices with a parametrized prefactor corresponding to the amount of local memory available \cite[Theorem 5.1]{ITT04}.
If we generalize it straightforwardly to rectangular dimensions and minimize the prefactor over any amount of local memory, then we obtain a bound of at least $1/2\cdot (mnk/P)^{2/3}$, which is asymptotically tight for $mn/k^2 \leq P$.
They do not provide any tighter results for $P < mn/k^2$.

The third row of the table corresponds to the results of Demmel et al.~\cite{DE+13}.
This work was the first to establish bounds for small values of $P$ and identify the three cases of asymptotic expressions.
\Cref{thm:main} obtains the same cases and leading order terms (up to constant factors) \cite[Table I]{DE+13}, and we present the explicit constant factors for leading order terms derived in \cite[Section II.B]{DE+13}.
We note that the boundaries between cases differ by a constant in that paper, which we do not reflect in \cref{tab:summary}.
Compared to these results, \cref{thm:main} establishes a tighter constant in all three cases.

We note that Kwasniewski et al. claim a combined result of memory-dependent and memory-independent parallel bounds \cite[Theorem 2]{KK+19}.
The memory-independent term has a constant that matches the 3rd case of \cref{thm:main}.
However, the proof includes a restrictive assumption on parallelization strategies, requiring that each processor is assigned a set of domains that are subblocks of the iteration space with dimensions $a\times a\times b$ for some $a,b$, and therefore does not apply to all parallel algorithms.

\subsection{Limited-Memory Scenarios}
\label{sec:mm:mdb}

The local memory required by \cref{alg:3dmatmul} matches the amount of communication performed plus the data already owned by the processor, which is given by the positive terms in \cref{eq:cost} and matches the value of $D$ in \cref{thm:main} with the optimal processor grid.
Note that the local memory $M$ must be large enough to store the inputs and output matrices, so $M \geq (mn+mk+nk)/P$.
When 1D or 2D processor grids are used, the local memory required is no more than a constant more than the minimum required to store the problem.
Further, \cref{alg:3dmatmul} can be adapted to reduce the temporary memory required to a negligible amount at the expense of higher latency cost but without affecting the bandwidth cost, and thus the algorithmic approach can be used even in extremely limited memory scenarios.
In the case of 3D processor grids, however, the temporary memory used by \cref{alg:3dmatmul} asymptotically dominates the minimum required, and thus the algorithm cannot be applied in limited-memory scenarios.
Reducing the memory footprint in this case necessarily increases the bandwidth cost.
Algorithms that smoothly trade off memory for communication savings in these limited memory scenarios are well studied \cite{MT99,SD11,BDHLS12-SS,KK+19}.

From the lower bound point of view, while \cref{thm:main} is always valid, it may not be the tightest bound in limited-memory scenarios.
The memory-dependent bound with leading term $2mnk/(P\sqrt M)$ (see \cite{SLLvdG19,KK+19,OLPSR20} and discussion in \cref{sec:related:seq}) can be larger.
In particular, this occurs when $mn/k^2 < P \leq 8/27 \cdot mnk/M^{3/2}$, and the memory-dependent bound dominates the memory-independent bound of $3(mnk/P)^{2/3}$ in that case.
This scenario implies that $M< 4/9 \cdot (mnk/P)^{2/3},$ in which case the temporary space required by \cref{alg:3dmatmul} exceeds the available memory.
Thus, the tightness of \cref{thm:main} for $mn/k^2 <P$ requires an assumption of sufficient memory.

When $P \leq mn/k^2$, the memory-independent bounds in the first two cases of \cref{thm:main} are always tight, with no assumption on local memory size.
That is, the memory-dependent bound never dominates the memory-independent bound.
Consider the 2nd case, so that $ m/n \leq P \leq mn/k^2$ and the memory-independent bound is $2(mnk^2/P)^{1/2}$.
Because the local memory must be large enough to store the largest matrix as well as the other two matrices, we have $M > mn/P$.
This implies $2mnk/(P\sqrt M) < 2 (mnk^2/P)^{1/2}$, so the memory-independent bound dominates.

Suppose further that $P \leq \frac mn$.
In this case, the leading-order term of the memory-independent bound is $nk$.
This 1st-case bound dominates the 2nd-case bound, which dominates the memory-dependent bound from the argument above.
Comparison of the full bounds of the 1st and 2nd cases simplifies to $2(mnk^2/P)^{1/2} \leq mk/P + nk$, which holds by the arithmetic-geometric mean inequality.

\subsection{Extensions}

The proof technique we use to obtain \cref{thm:main} is more generally applicable.
The basic approach of defining a constrained optimization problem to minimize the sum amount of data accessed subject to constraints on that data that depend on the nature of the computation has been used before for matrix multiplication \cite{SLLvdG19} and for tensor computations \cite{BKR18,BR20}.
The key to the results presented in this work is the imposition of lower bound constraints on the data accessed in each individual array given by \cref{lem:projlb}.
These lower bounds become active when the aspect ratios of the matrices are large relative to the number of processors and allow for tighter lower bounds in those cases.
The argument given in \cref{lem:projlb} is not specific to matrix multiplication, it depends only on the number of operations a given word of data is involved in, so it can be applied to many other computations that have iteration spaces with uneven dimensions.
We believe this will yield new or tighter parallel communication bounds in several cases.

%%
%% The acknowledgments section is defined using the "acks" environment
%% (and NOT an unnumbered section). This ensures the proper
%% identification of the section in the article metadata, and the
%% consistent spelling of the heading.
\begin{acks}
This work is supported by the National Science Foundation under Grant No. CCF-1942892 and OAC-2106920.
This project has received funding from the European Research Council (ERC) under the European Union's Horizon 2020 research and innovation program (Grant agreement No. 810367).
\end{acks}

%%
%% The next two lines define the bibliography style to be used, and
%% the bibliography file.
\balance
\bibliographystyle{ACM-Reference-Format}
\bibliography{refs}

\end{document}